
\documentclass[letterpaper, 10 pt, conference]{ieeeconf}  

\IEEEoverridecommandlockouts                              

\overrideIEEEmargins                                      



\usepackage{cite}

\usepackage{amsmath,amssymb,amsfonts}
\usepackage{amsthm}

\usepackage{algorithmic}
\usepackage{graphicx}
\usepackage{textcomp}
\usepackage{xcolor}
\newtheorem{prop}{Proposition}
\newtheorem{lemma}{Lemma}
\newtheorem{assumption}{Assumption}
\newtheorem{theo}{Theorem}

\newtheorem{defi}{Definition}
\newtheorem{exmp}{Example}[section]
\usepackage[norelsize, linesnumbered, ruled, lined, boxed, commentsnumbered]{algorithm2e}

\usepackage[normalem]{ulem}
\usepackage{mathabx}

\title{\LARGE \bf
Structural Stability of a Family of \\Group Formation Games 
}

\author{Chenlan Wang$^{1}$, Mehrdad Moharrami$^{2}$, Kun Jin$^{1}$, David Kempe$^{3}$, P. Jeffrey Brantingham$^{4}$,  and Mingyan Liu$^{1}$
\thanks{$^{1}$ University of Michigan, Ann Arbor, USA}%
\thanks{$^{2}$ University of Illinois at Urbana-Champaign, Illinois, USA}%
\thanks{$^{3}$ University of Southern California, Los Angeles, USA}%
\thanks{$^{4}$ University of California, Los Angeles, Los Angeles, USA}%
}

\begin{document}

\maketitle
\thispagestyle{empty}
\pagestyle{empty}

\begin{abstract}
We introduce and study a group formation game in which individuals/agents, driven by self-interest, team up in disjoint groups so as to be in groups of high collective strength. This strength could be group identity, reputation, or protection, and is equally shared by all group members. The group's access to resources, obtained from its members, is traded off against the geographic dispersion of the group: spread-out groups are more costly to maintain. We seek to understand the stability and structure of such partitions. We define two types of equilibria: 
\begin{enumerate}
	\item Acceptance Equilibria (AE), in which no agent will unilaterally change group affiliation, either because the agent cannot increase her utility by switching, or because the intended receiving group is unwilling to accept her (i.e., the utility of existing members would decrease if she joined); and 
	\item Strong Acceptance Equilibria (SAE), in which no subset of any group will change group affiliations (move together) for the same reasons given above. 
\end{enumerate} 
We show that under natural assumptions on the group utility functions, both an AE and SAE always exist, and that any sequence of improving deviations by agents (resp., subsets of agents in the same group) converges to an AE (resp., SAE). We then characterize the properties of the AEs. We show that an ``encroachment'' relationship --- which groups have members in the territory of other groups --- always gives rise to a directed acyclic graph (DAG); conversely, given any DAG, we can construct a game with suitable conditions on the utility function that has an AE with the encroachment structure specified by the given graph. 
\end{abstract}

\section{Introduction}
The formation of groups and group membership plays an important role in human societies. Individuals form groups (or join existing groups) to benefit from shared interests or resources, reputation, protection, safety, monetary rewards, etc. For example, clubs are formed by individuals sharing similar interests, gangs form (among others) to provide their members protection \cite{gravel2018birds},
and states form strategic alliances. The process by which groups are formed, and their resulting stability, is of great interest and has been studied in many fields, including computer networks, social sciences, economics, and political sciences \cite{goette2006impact,milchtaich2002stability,hollard2000existence}. Different models have been introduced to study such processes; including coalition formation games \cite{slikker2001coalition,bogomolnaia2002stability,banerjee2001core,ray2015coalition,aziz2013computing,aziz2011stable,igarashi2016hedonic,igarashi2017coalition,hoefer2018dynamics,hagen2020two,dreze1980hedonic}, agent-based modeling \cite{collins2017agent,collins2018strategic}, and signed network formation \cite{hiller2017friends}.

One of the key tradeoffs in the formation of groups is between adding more resources to a group vs.~making it less cohesive. In the examples above, when a club broadens its interests, it can add more members, but at the cost of less thematic cohesion. When a gang expands its territory, it can add more members (who add the ability to offer protection to others), but at the cost of less spatial cohesion, which makes it harder to coordinate actions.
Similarly, when a strategic alliance comprises more countries, it can draw on more strength, but will suffer in less geographic cohesion. The big-picture question we investigate in this work is the following: \emph{what are the effects of this tradeoff in terms of the group structures that one observes at equilibrium?}

To answer this question more formally, we introduce a group formation game (GFG) in which the group members' utilities capture a trade-off between the combined resources of the group and the geographical dispersion. At a high level, our GFG is similar to hedonic games \cite{bogomolnaia2002stability} studied within the family of coalition games. In the GFG we define (formal definitions are in Section~\ref{sec:group formation game}), agents exist in a Euclidean space. Each agent is endowed with a real-valued amount of individual resources. Groups form endogenously, and agents will form/join groups so as to maximize their utilities. Agents derive their utility from the utility of the group they belong to, plus optionally an individual component. The group utility is the same for all group members and is a linear combination of two terms: the power of the group which only depends on the group's members (their resources and locations); and the negative effect of more powerful groups which is a linear function of their group members (resources and locations). In Section~\ref{sec: Structural Properties} where we study structural properties of the groups that are formed at equilibrium, we introduce a slightly stronger assumption that group utility depends only on its own members (their resources and locations), but not on any other group. This latter assumption is similar to hedonic preferences, which also ignore dependencies between coalitions \cite{dreze1980hedonic}, although in a hedonic game, typically, only preference orders are specified, without numerical utility functions. The utility of a group is a monotone increasing function of the total resources of the group's members, and a monotone decreasing function of how ``dispersed'' the group is. We define and study several such notions, specifically, the volume, diameter, and surface area of the group.

The groups form endogenously, with each agent choosing a strategy in a self-interested way.
In choosing an appropriate equilibrium notion, we note that the settings described above typically allow agents to join a group only with the group's consent (whereas individuals can of course leave a group unilaterally). We therefore define two notions of equilibria/stability that generalize the pairwise equilibria of \cite{jackson1996strategic}; the first is similar to the {\em individual stable equilibrium}  studied by \cite{dreze1980hedonic} in the context of hedonic games.\footnote{
	This equilibrium definition is equivalent to the one in \cite{dreze1980hedonic};  however, our game is different from the game in \cite{dreze1980hedonic}.}
\begin{enumerate}
	\item Acceptance Equilibrium (AE): each agent (weakly) prefers membership in her group over any group that would (weakly) prefer the agent to join that group.%
	\footnote{This of course includes that the agent would not prefer to form a new (singleton) group by herself.}\item Strong Acceptance Equilibrium (SAE): each subset of agents who are currently in the same group (weakly) prefers membership in their current group over joint membership in any group that would (weakly)  prefer the entire subset to join.\footnote{Again, this includes that this subset of agents would not prefer to form a new group by themselves.} 
\end{enumerate}

Several facts are worth noting about the notions of AE and SAE:
(i) The notion of AE is different from the standard definition of Nash equilibrium \cite{bogomolnaia2002stability}, in which agents can join a group unilaterally, without the group's approval.
(ii) This type of approval by groups is a natural requirement in many realistic scenarios; see, e.g., \cite{gravel2018birds}. (iii) An alternative characterization of AE is obtained by considering a normal-form representation of the game. Each agent chooses a subset of agents including herself as a strategy. When these strategies are consistent with each other (i.e., the sets form a partition into ``cliques''), the result is the groups given by the strategies, with corresponding payoffs\footnote{Under this characterization, choosing a subset to form a group with is perhaps more aptly considered a ``preference'' rather than a ``strategy'' or ``action''; for simplicity, we will use these terms interchangeably in this paper.}. Otherwise, the result has utility $-\infty$ for all agents. Then, SAE can be viewed as equilibria under possible group deviations. 

We show (in Section~\ref{sec:existence of equilibrium}) that under natural assumptions on the group utility functions, both an AE and SAE always exist; this is done by showing that a better-response type of algorithm always finds such an equilibrium. Under a certain condition, we also introduce a polynomial time algorithm that characterizes an SAE.
These equilibria are in general not unique.
Our main focus (in Section~\ref{sec: Structural Properties}) is then on structural properties of the groups that form at equilibrium.  We define the territory of a group as the convex hull of the individuals inside the group; our main focus is on understanding the relationship between the convex hulls of different groups, and to what extent they may overlap or contain each other, for different families of group utility functions. A particularly useful tool in characterizing the group structure is a graph capturing the territorial relationships among groups. We explore which graphs can be obtained at equilibria of the game in this way.

A practical application of our model is to infer the power relation between gang groups from their structures at equilibrium. For example, 
if one gang
group is nested within another group, and this territorial relation is stable, then it indicates that the inner group is more powerful than the outer group; otherwise the inner group would be absorbed by the outer group.
Our model can be applied to predict stable group structures given individual locations and resources. 

The literature most relevant to our work is on Coalition Formation Games, which have been studied in the context of stable marriage, roommate assignment, and research team formation; see, e.g., \cite{hagen2020two,hoefer2018dynamics,igarashi2017coalition,igarashi2016hedonic,ui2000shapley,slikker2001coalition,aziz2013computing,aziz2011stable}. The game we study is closest to hedonic games \cite{bogomolnaia2002stability,banerjee2001core}, in which groups are formed based on individual declared preferences and preference rankings, often in the absence of specific utility functions. By contrast, our game is first and foremost defined by individual utility functions, where preferences serve as a convenient and alternative way of representing strategy profiles. In addition, our game is not limited to hedonic settings. Beyond coalition formation games, group formation has been studied computationally using Agent-Based Models (ABM) \cite{collins2017agent,collins2018strategic}, in the context of group identity \cite{dev2018group}, using agent similarity \cite{milchtaich2002stability}, and by modeling inter-group conflict \cite{hiller2017friends,heider1946attitudes, cartwright1956structural, davis1967clustering}.  
\section{The Group Formation Game (GFG)}
\label{sec:group formation game}

\subsection{The model} \label{sec:model}
Each of the $n \geq 2$ individuals/agents is indexed by $i \in \mathcal{N}=\{1,2,...,n\}$.
Agent $i$ is located at $x_i \in \mathbb{R}^d$ (where $d \geq 1$ is the dimension of the Euclidean space), and has positive scalar resources (abilities, skills, characteristics, etc.) $r_i > 0$.
We write $\mathbf{x} = (x_1, \ldots, x_n)$ for the vector (technically, matrix) of all agents' locations, and $\mathbf{r} = (r_1, \ldots, r_n)$ for the vector of all agents' resources.

We study group formation as a one-shot game in which each individual chooses the agents with whom she wants to be in the same group.
Thus, agent $i$'s strategy/action/preference space is $\mathcal{A}_i = \{ A \subseteq \mathcal{N}|i \in A\}$;
we will denote agent $i$'s action by $a_i \in \mathcal{A}_i$.
The set of all joint action profiles is $\mathcal{A} = \bigtimes_{i=1}^n \mathcal{A}_i$, and a joint action profile is denoted by $\mathbf{a} = (a_1, a_2, \ldots, a_n)$.
The profile of the actions of all agents except $i$ is $\mathbf{a}_{-i}$.
In particular, we can write $\mathbf{a} = (a_i, \mathbf{a}_{-i})$.

We are interested only in profiles under which the groups form a \emph{disjoint partition} of the individuals, and in which the actions chosen by agents are consistent.
We say that the action profile $\mathbf{a}$ is \emph{feasible} if and only if $a_i = \{j \in \mathcal{N}|a_j = a_i\}$ for all $i\in\mathcal{N}$;
in words, if all agents that $i$ wants to be in a group with also want to be in the \emph{same} group.

A feasible profile $\mathbf{a} = (a_1, a_2, \ldots, a_n)$ partitions the players $\mathcal{N}$ into $m = m(\mathbf{a}) \leq n$ disjoint \emph{groups}.
We index these groups as $G_1(\mathbf{a}), G_2(\mathbf{a}), \ldots, G_m(\mathbf{a})$; we will discuss a specific useful indexing scheme in Section \ref{sec:ordering}.
In this and other notation, we omit the dependence on $\mathbf{a}$ for legibility when it is clear from the context.
When the preference profile $\mathbf{a}$ is feasible, and the indexing of resulting groups has been fixed, we use $\sigma_i = \sigma_i(\mathbf{a})$ to denote the index of the (unique) group $k$ such that $i \in G_k(\mathbf{a})$. We use the same notation for subsets of a group, i.e., for $S\subset G_k(\mathbf{a})$ we set $\mathbf{\sigma}_S = k$.\footnote{If the members of $S$ are not all in the same group, then $\mathbf{\sigma}_S$ is undefined.} In that case (when the groups are clear from context), we often consider $\mathbf{\sigma} = \mathbf{\sigma}(\mathbf{a})$ itself to be the strategy (or group affiliation) profile.

As discussed in the introduction, the reasons motivating individuals to form groups include seeking protection, pooling resources, and gaining reputation, among others.  
Within this context, we 
introduce two assumptions on the individual utility functions. 

\begin{assumption} \label{assp:group-depends-on-members}
	The utility of a group depends on its own members' resources
	and locations as well as on the composition of other groups as follows: 
	\begin{align}\label{eq:group-utility}
	U_{G}(\mathbf{r}, \mathbf{x}) & = P_{G}(\mathbf{r}, \mathbf{x}) - \sum_{G': P_{G'}(\mathbf{r}, \mathbf{x}) > P_G(\mathbf{r}, \mathbf{x}) } H_{G'}(\mathbf{r}, \mathbf{x}),
	\end{align}
	where $P_G(\mathbf{r}, \mathbf{x})$ is a function that represents power of group $G$ and depends only on $\{r_i|i \in G\}$ and $\{x_i|i \in G\}$, and $H_{G'}(\mathbf{r}, \mathbf{x})$ is a non-negative function of $\{r_i|i \in G'\}$ and $\{x_i|i \in G'\}$, for groups $G'$ that are more powerful than $G$. 
\end{assumption}

This assumption applies in situations where less powerful groups are negatively influenced by more powerful groups. 
If $H_{G'}(\mathbf{r}, \mathbf{x}) = 0$, then the groups' utilities are independent of one another and depend only on their own membership; thus, the hedonic setting is considered as a special case. In the hedonic setting, group utility is equivalent to group power. 
	Notice that this assumption is general in the sense that the power of each group, i.e., $P_G(\mathbf{r}, \mathbf{x})$, depends only on its members' locations and resources, and there are no restrictions on the forms of those quantities.
	

Next, we discuss the nature of the group power function $P_G(\mathbf{r}, \mathbf{x})$. Fundamentally, we are interested in power functions that (1) increase in the resources available to the group, and (2) decrease as the group members are more ``spread out.''
Specifically, we write $R_G = \sum_{i \in G} r_i$ for the total resources available to the group, and $D_G$ generically for a notion of how spread out the group is, referred to as the group's \emph{coverage}. Some natural examples are:

\begin{enumerate}
	\item The max pairwise distance 
	$D_G \!=\! \max_{i,j \in G} \Vert{x_i - x_j}\Vert_2$. 
	\item The \emph{volume} of the convex hull of $\{x_i|i \in G\}$.
	\item The \emph{surface area} of the convex hull of $\{x_i|i \in G\}$; this captures a notion of the ``border'' of the group.
\end{enumerate}

These concepts are used later in the paper (Section \ref{sec: Structural Properties}), when we explicitly consider utilities under the hedonic form $U_G(R_G, D_G)$.

\begin{assumption} \label{assp:group-to-individual}
	Every agent in the same group $k$ shares the same group utility\footnote{%
		While we treat the agent's utility as \emph{equal} to the group's, one can easily add an agent-specific term which does not depend on the group's utility. Since this term does not depend on the action profile, it would be irrelevant to the analysis of equilibrium outcomes and is thus not explicitly modeled here.}.
	That is, if $i \in G := G_{\sigma_i}(\mathbf{a})$, then 
	\begin{align}
	u_i(\mathbf{a}) & = U_G(\mathbf{r},\mathbf{x}).
	\label{eq:individual utility}
	\end{align}
\end{assumption}

This assumption applies in cases where the group's combined achievements are enjoyed equally by its members; it rules out scenarios in which some members of the group hold privileged status, or overall benefit more from the group's pooled resources. For notational convenience later, we define the power of the empty group as $P_{\emptyset}=-\infty$ (thus $U_{\emptyset}=-\infty$).


\subsection{Equilibrium and Stability}

Next, we define the notions of equilibria we study.
At a high level, our goal is to capture stability against deviations by individuals or groups. Importantly, such deviations to another group are possible only when the group accepts these new member(s). Recall that by Assumption~\ref{assp:group-to-individual}, all members of a group have the same utility. Therefore, all group members will accept a new member(s) if and only if one of them does; in other words, all approvals are automatically unanimous. We call these {\em acceptance} equilibria. We define two types of such equilibria, depending on whether we are considering individual deviations or group deviations. 


\subsubsection*{Acceptance Equilibrium (AE)}
A group affiliation profile $\mathbf{\sigma}^*$ (and its corresponding strategy profile $\mathbf{a}$) is an \emph{Acceptance Equilibrium} (AE) if and only if no agent can benefit from joining a group that would accept her. Formally, $\mathbf{a}$ is an AE if and only if for all agents $i$ (writing $G = G_{\sigma^*_i}(\mathbf{a})$) and any group $G' = G_k(\mathbf{a})$ (including $G' = \emptyset$) with $k \neq \sigma^*_i$, at least one of the following two inequalities holds: \begin{align}
U_G(\mathbf{r}, \mathbf{x})  & \geq U_{G' \cup \{i\}}(\mathbf{r}, \mathbf{x}) \label{eq:ae1} \\
U_{G'}(\mathbf{r}, \mathbf{x}) & > U_{G' \cup \{i\}}(\mathbf{r}, \mathbf{x}). \label{eq:ae2}
\end{align}
The first inequality states that agent $i$ is weakly better off in her current group than by joining $G'$ (and hence would prefer not to deviate);
the second inequality states that group $G'$ is better off without having agent $i$ join, and hence prefers not to accept $i$.
By including $G' = \emptyset$ and recalling that $U_{\emptyset} = -\infty$, we also capture that $i$ would not prefer to deviate to being in a group by herself.

\subsubsection*{Strong Acceptance Equilibrium (SAE)}
A group affiliation profile $\mathbf{\sigma}^*$ (and its corresponding strategy profile $\mathbf{a}$) is a \emph{Strong Acceptance Equilibrium (SAE)} if no \emph{subset of agents} from the same group can be better off by deviating together to another group that would accept them.
Formally, $\mathbf{a}$ is an SAE if and only if for every pair of groups $G = G_k(\mathbf{a}), G' = G_{k'}(\mathbf{a})$ (again, allowing for $G' = \emptyset$) and every subset $S \subseteq G$ of agents, at least one of the following two inequalities holds:
\begin{align}
U_G(\mathbf{r}, \mathbf{x}) \geq U_{G' \cup S} (\mathbf{r}, \mathbf{x}),  \label{eq:SAS1}\\
U_{G'}(\mathbf{r}, \mathbf{x}) > U_{G' \cup S} (\mathbf{r}, \mathbf{x}). \label{eq:SAS2}
\end{align}
The first inequality states that agents in $S$ weakly prefer staying in $G$ over deviating to join $G'$, the second that agents in $G'$ prefer not to accept the additional members $S$.
Again, by including $G'=\emptyset$, we capture that the members of $S$ do not prefer to form a new group.
Notice that it was possible to express this condition concisely by heavily exploiting Assumption~\ref{assp:group-to-individual}, namely, that all members of $S$ will obtain the \emph{same} utility in $G$, and also as new parts of $G'$.

\subsection{States and the ordering of groups and states} \label{sec:ordering}
We will consider \emph{dynamics} in which agents change their group affiliations and the dynamics' convergence to equilibria. For that reason, we also think of action profiles as \emph{states}, and deviations as transitions between these states. We now describe a specific way to order these states.

Given a state $\mathbf{a}$, we order/label the resulting groups in the partition by non-decreasing group utility, breaking ties arbitrarily, e.g., lexicographically by some description of the group's membership.
Expressed formally, with $m = m(\mathbf{a})$ denoting the number of groups, we index the groups $G_1, G_2, \ldots, G_m$ such that 
$
U_{G_1} (\mathbf{r}, \mathbf{x}) \geq U_{G_2} (\mathbf{r}, \mathbf{x})
\geq \cdots \geq U_{G_m}(\mathbf{r}, \mathbf{x})
$, which implies 
$
P_{G_1} (\mathbf{r}, \mathbf{x}) \geq P_{G_2} (\mathbf{r}, \mathbf{x})
\geq \cdots \geq P_{G_m}(\mathbf{r}, \mathbf{x}). 
$
Similarly, we can index the agents $1,2, \ldots, n$ such that
$
u_1 (\mathbf{r}, \mathbf{x}) \geq u_2 (\mathbf{r}, \mathbf{x})
\geq \cdots \geq u_n (\mathbf{r}, \mathbf{x}). 
$
Since agents from the same group have the same utility, we label them consecutively. 
Notice that the indexing of groups (resp. agents) for two different states $\mathbf{a}, \mathbf{a'}$ may be very different, even when the resulting partitions share common groups (resp. agents). In particular, this observation is relevant when $\mathbf{a'}$ was obtained from $\mathbf{a}$ by the deviation of one agent or a subset of agents: the fact that the utilities of the affected groups (resp. agents) may have changed could result in a different ordering.

With each state $\mathbf{a}$, based on the above ordering, we associate a vector
$\mathbf{\Psi} (\mathbf{a}) = [u_1 (\mathbf{r}, \mathbf{x}), u_2 (\mathbf{r}, \mathbf{x}) , \ldots, u_n (\mathbf{r}, \mathbf{x})]$.
The vector $\mathbf{\Psi} (\mathbf{a})$ associated with each state $\mathbf{a}$ allows us to order the states lexicographically: a state $\mathbf{a}$ is ranked higher than $\mathbf{a'}$ in lexicographical order iff there exists an index $k \in \{1, \ldots, n\}$ such that: 
\begin{equation}
\psi_{k} (\mathbf{a})  > \psi_{k}(\mathbf{a'})~, 
\psi_{k'}(\mathbf{a}) = \psi_{k'}(\mathbf{a'}), ~\text{ for all } k' < k.
\label{eq:order2}
\end{equation}
In this case, we write $\mathbf{\Psi} (\mathbf{a}) \succ \mathbf{\Psi} (\mathbf{a'})$. Notice that it is possible to have $\mathbf{a} \neq \mathbf{a'}$ while $\mathbf{\Psi} (\mathbf{a}) = \mathbf{\Psi} (\mathbf{a'})$.


\section{Existence of Equilibria and Convergence}
\label{sec:existence of equilibrium}

In this section, we prove that every instance of the GFG has at least one SAE and thus at least one AE (since any SAE is an AE). We in fact establish a much stronger fact: that any update dynamics under which agents (resp., subsets of agents who are currently in the same group) always strictly improve their utility converges to an AE (resp., SAE).

\subsection{Dynamics and Convergence}
Algorithm~\ref{al:BR for AE} captures a generic asynchronous improvement update algorithm. We will show that this algorithm converges to an SAE. Lemma~\ref{lemma:lexicographic-improvement} shows that every state transition corresponding to an improvement by a subset of agents from an existing group results in a higher-ranked new state according to the lexicographical ordering. Thus, using lexicographic rank as a potential function immediately implies Theorem~\ref{theo:best-response converge}, which shows that the dynamics converges to an SAE; in particular, this proves the existence of an SAE, and thus an AE. 

\begin{algorithm}
	\SetAlgoLined
	\LinesNumbered
	\SetKwInOut{Input}{Input}
	\SetKwRepeat{Do}{do}{while}
	\Input{$\mathbf{\sigma}^{(0)}$, a partition of $\mathcal{N}$}
	$t \leftarrow 0$\;
	\Do{there exists a set $S$ with $\text{IR}_S \neq \emptyset$}{\ForAll{$S \subset \mathcal{N}$}{
			\uIf{$S\subset G_k$ for some $k$}{$\text{IR}_S \leftarrow \{k'\,|\,U_{G_{k'} \cup S}(\mathbf{r}, \mathbf{x}) > U_{G_{\mathbf{\sigma}^{(t)}_S}}(\mathbf{r}, \mathbf{x}) \allowbreak \text{ and } U_{G_{k'} \cup S}(\mathbf{r}, \mathbf{x}) \geq U_{G_{k'}}(\mathbf{r}, \mathbf{x})\}$}
			\Else{$\text{IR}_S \leftarrow \emptyset$}
		}
		\If{there exists a set $S$ with $\text{IR}_S \neq \emptyset$}{
			Let $S$ be arbitrary such that $\text{IR}_S \neq \emptyset$\;
			Let $k' \in \text{IR}_S$ be arbitrary\;
			Obtain $\mathbf{\sigma}^{(t+1)}$ from $\mathbf{\sigma}^{(t)}$ by updating the group membership of agents in $S$ to $k'$, and leaving other memberships unchanged\;
		}
		$t \leftarrow t+1$\;
	}
	\caption{Asynchronous Improvement Update}
	\label{al:BR for AE}
\end{algorithm}

In Algorithm~\ref{al:BR for AE}, notice that $\text{IR}_S$ is the set of strictly improving responses for agents in $S$ (singleton groups are also included), i.e., the set of groups that strictly improve the agents' utility over their current utility, and would accept these agents as members. Also, note that $\text{IR}_S$ is only defined for subsets of agents that are currently in the same group. Since both the improving subset of agents and the specific new group are chosen arbitrarily, Algorithm~\ref{al:BR for AE} captures any dynamics in which subsets of agents who are currently in the same group always change strategies to improve their utility. Restricting $S$ in Algorithm~\ref{al:BR for AE} to be singletons, we obtain the same results for dynamics in which individuals change their strategies to improve their utility.

\begin{lemma} \label{lemma:lexicographic-improvement}
	In Algorithm~\ref{al:BR for AE}, for each iteration $t$, the new state $\mathbf{a}^{(t+1)}$ (corresponding to $\mathbf{\sigma}^{(t+1)}$) strictly precedes the state $\mathbf{a}^{(t)}$ (corresponding to $\mathbf{\sigma}^{(t)}$) in the lexicographical order of states. That is, 
	\begin{align} \label{eq:lemma1-3}
	\mathbf{\Psi} (\mathbf{a}^{(t+1)}) & \succ \mathbf{\Psi} (\mathbf{a}^{(t)}). 
	\end{align}
\end{lemma}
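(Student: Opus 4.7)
The plan is to show that each iteration of Algorithm~\ref{al:BR for AE} produces strictly more agents at a sufficiently high utility level, and then to convert this fact into lexicographic precedence. Let $S \subseteq G := G_{\sigma^{(t)}_S}$ be the deviating subset, let $G' := G_{k'}$ be the destination group in state $\mathbf{a}^{(t)}$ with $k' \in \text{IR}_S$, and write $u^* := U_{G' \cup S}(\mathbf{r}, \mathbf{x})$. The membership rule for $\text{IR}_S$ supplies the two anchor inequalities $u^* > U_G(\mathbf{r}, \mathbf{x})$ (strict gain for the deviators) and $u^* \geq U_{G'}(\mathbf{r}, \mathbf{x})$ (the destination accepts them).

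Introduce the count $N(\tau, \mathbf{a}) := |\{i \in \mathcal{N} : u_i(\mathbf{a}) \geq \tau\}|$ and establish two sub-claims by partitioning the agents according to their role in the move: (i) $N(\tau, \mathbf{a}^{(t+1)}) \geq N(\tau, \mathbf{a}^{(t)})$ for every $\tau \geq u^*$; and (ii) the inequality is strict at $\tau = u^*$, with $N(u^*, \mathbf{a}^{(t+1)}) \geq N(u^*, \mathbf{a}^{(t)}) + |S|$. Agents outside $S \cup G \cup G'$ are unaffected. Agents in $S$ jump from $U_G < u^*$ up to $u^*$, which adds $|S|$ to the count at threshold $u^*$ while leaving counts at $\tau > u^*$ unchanged. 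Agents originally in $G'$ rise weakly from $U_{G'} \leq u^*$ to $u^*$, so they never drop out of any count. Agents in $G \setminus S$ had utility $U_G < u^*$ and so were absent from every count at $\tau \geq u^*$; their new utility $U_{G \setminus S}$ can therefore only add to those counts.

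Next, convert the count inequalities into the required lexicographic precedence. Let $\tau^*$ be the largest $\tau$ at which the two count functions disagree; sub-claims (i) and (ii) ensure that $\tau^*$ exists, $\tau^* \geq u^*$, and $N(\tau^*, \mathbf{a}^{(t+1)}) > N(\tau^*, \mathbf{a}^{(t)})$. Because the counts coincide for every $\tau > \tau^*$, the multisets of utilities strictly above $\tau^*$ are identical in the two states, so the initial segments of $\mathbf{\Psi}(\mathbf{a}^{(t)})$ and $\mathbf{\Psi}(\mathbf{a}^{(t+1)})$ agree up to some index at which $\mathbf{\Psi}(\mathbf{a}^{(t+1)})$ still carries an entry equal to $\tau^*$ while $\mathbf{\Psi}(\mathbf{a}^{(t)})$ has already fallen strictly below $\tau^*$. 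At that index the new vector strictly exceeds the old, which is exactly (\ref{eq:lemma1-3}).

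The hard part of a fully general argument is handling Assumption~\ref{assp:group-depends-on-members} when the penalty terms $H_{G''}$ are nontrivial: the deviation of $S$ can shift the power ordering among groups and therefore perturb the utilities of groups untouched by the move itself through the $H_{G''}$ summation. In the hedonic subcase $H_{G''} \equiv 0$ these side effects disappear and the bookkeeping above is immediate; in general one must verify that such perturbations at thresholds $\tau \geq u^*$ cannot cancel the top-of-vector gain of $|S|$ new agents at utility $u^*$, which is the genuinely delicate part of the proof.
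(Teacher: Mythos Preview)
Your counting approach via $N(\tau,\mathbf{a})$ is a clean route to lexicographic precedence, and for the hedonic subcase ($H_{G''}\equiv 0$) your bookkeeping is complete. But you explicitly leave the general case of Assumption~\ref{assp:group-depends-on-members} open in your final paragraph, and that is exactly where the content of the lemma lies. As written, the claim ``agents outside $S\cup G\cup G'$ are unaffected'' is false in the non-hedonic setting, so sub-claims (i) and (ii) are not yet established. Acknowledging that a step is ``genuinely delicate'' and then not carrying it out leaves a real gap.

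The missing observation is the one the paper's proof pivots on: only the \emph{powers} $P_{G}$ and $P_{G'}$ change under the move, and because $U$ is monotone in $P$ within any fixed partition, every unaffected group $G''$ sits on the same side of the threshold $u^*$ in both states. Concretely, let $P^*:=\max\{P_{G\setminus S},P_{G'\cup S}\}$; the paper notes that $P^*\geq\max\{P_G,P_{G'}\}$ from the $\text{IR}_S$ inequalities (utility order implies power order under Assumption~\ref{assp:group-depends-on-members}). Any unaffected $G''$ with $P_{G''}>P^*$ has an identical set of more-powerful groups before and after, so $U_{G''}$ is unchanged. Any unaffected $G''$ with $P_{G''}\leq P^*$ satisfies $U_{G''}< u^*$ in \emph{both} partitions, since its $H$-sum always contains the $H$-terms of all groups with power above $P^*$, giving $U_{G''}\leq P_{G''}-\sum_{P_{G'''}>P^*}H_{G'''}<P^*-\sum_{P_{G'''}>P^*}H_{G'''}=u^*$. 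Hence such agents never enter $N(\tau,\cdot)$ for $\tau\geq u^*$, and your sub-claims (i) and (ii) then go through verbatim in the general case.

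Once this is added, your argument is correct and in fact somewhat different in flavor from the paper's: the paper argues directly by position, identifying the first index $\hat k$ at which the sorted utility vectors can differ and showing an ``after'' group occupies it with strictly higher utility; you instead compare the cumulative counts $N(\tau,\cdot)$ and extract the lexicographic conclusion from the first threshold at which they disagree. The two are equivalent reformulations, but your version makes the role of the $|S|$ new agents at level $u^*$ more explicit, while the paper's version makes the role of the power ordering (and hence the handling of the $H$-terms) more explicit.
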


\begin{proof}
	When a set of agents $S$ deviates from $G_k$ to another group $G_{k'}$, the only groups whose power may be affected by this are $G_k$ and $G_{k'}$. Let $\Tilde{k}$ and $\Tilde{k}'$ denote the indices of these two groups after the deviation, i.e., $G_{\Tilde{k}}:=G_k\setminus S$ and $G_{\Tilde{k}'}:=G_{k'}\cup S$.
	Let $\hat{k} \in \{k, k',\Tilde{k},\Tilde{k}'\}$ be the index of the group with the highest power among the groups (before/after deviation). The utility of groups $G_1, \ldots, G_{\hat{k}-1}$ is the same before and after the deviation; hence, the corresponding elements in $\mathbf{\Psi} (\mathbf{a}^{(t)})$ and $\mathbf{\Psi} (\mathbf{a}^{(t+1)})$ are equal. By the definition of $\text{IR}_S$, we have $U_{G_{k'} \cup S}(\mathbf{r}, \mathbf{x}) > U_{G_{k}}(\mathbf{r}, \mathbf{x})$ and $U_{G_{k'} \cup S}(\mathbf{r}, \mathbf{x}) \geq U_{G_{k'}}(\mathbf{r}, \mathbf{x})$. Notice that the same inequalities hold for the power of groups involved. Hence, $\hat{k}$ is either $\Tilde{k}$ or $\Tilde{k}'$. In the first case, the utility of all agents in $G_k\setminus S$ has increased after the deviation. In the latter case, the utility of all agents in $S$ has increased.\footnote{It is possible that $G_{k'} \cup S$ moves further up in the utility ranking, in which case a higher utility occurs even earlier. But this only helps the argument.} Therefore, the new vector $\mathbf{\Psi}(\mathbf{\sigma}^{(t+1)})$ ranks lexicographically before $\mathbf{\Psi}(\mathbf{\sigma}^{(t)})$.
\end{proof}

\begin{theo} \label{theo:best-response converge}
	Algorithm~\ref{al:BR for AE} converges to an SAE in a finite number of steps.
	In particular, an SAE always exists, and thus an AE always exists.
\end{theo}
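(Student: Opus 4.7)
The plan is to combine Lemma~\ref{lemma:lexicographic-improvement} with a finiteness argument to obtain termination, and then verify that the terminal state of Algorithm~\ref{al:BR for AE} satisfies the SAE conditions \eqref{eq:SAS1}--\eqref{eq:SAS2}. Since every SAE is an AE (taking $|S|=1$ in the definition), existence of an AE follows immediately.

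First, I would argue finite termination. The set of feasible action profiles is in bijection with the partitions of $\mathcal{N}$, of which there are only finitely many (the $n$-th Bell number). Hence there are only finitely many possible utility vectors $\mathbf{\Psi}(\mathbf{a})$. By Lemma~\ref{lemma:lexicographic-improvement}, each iteration of Algorithm~\ref{al:BR for AE} produces a new state whose $\mathbf{\Psi}$-vector strictly succeeds the previous one in the lexicographic order $\succ$, so the vectors $\mathbf{\Psi}(\mathbf{a}^{(0)}), \mathbf{\Psi}(\mathbf{a}^{(1)}), \ldots$ are pairwise distinct. Thus the sequence cannot be infinite, and the \textbf{do}-loop must exit after finitely many iterations. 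In effect, the lexicographic rank of $\mathbf{\Psi}$ serves as an (integer-valued, bounded) potential function for the dynamics.

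Next, I would show that the terminal state $\mathbf{a}^{(T)}$ is an SAE. The loop exits precisely when $\text{IR}_S = \emptyset$ for every subset $S$ contained in some group $G_k$. Unpacking the definition of $\text{IR}_S$ from Algorithm~\ref{al:BR for AE}, this means that for every such $S \subseteq G = G_{\sigma_S}(\mathbf{a}^{(T)})$ and every target group $G' = G_{k'}(\mathbf{a}^{(T)})$, at least one of the inequalities
\begin{align*}
U_{G' \cup S}(\mathbf{r}, \mathbf{x}) & \leq U_{G}(\mathbf{r}, \mathbf{x}), \\
U_{G' \cup S}(\mathbf{r}, \mathbf{x}) & < U_{G'}(\mathbf{r}, \mathbf{x})
\end{align*}
must hold, which is exactly the SAE condition \eqref{eq:SAS1}--\eqref{eq:SAS2}. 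The case $G' = \emptyset$, corresponding to $S$ breaking off to form its own group, is handled uniformly via the convention $U_{\emptyset} = -\infty$ introduced in Section~\ref{sec:model}: joining the empty group makes the second inequality vacuous, and the first reduces to the requirement that $S$ weakly prefer staying in $G$ over forming a new group by itself.

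The main subtlety I expect is purely bookkeeping rather than conceptual: one must be careful that the indexing of groups can change from step to step (as noted right after the ordering is defined in Section~\ref{sec:ordering}), so that ``$G_k$ before the deviation'' and ``$G_{\tilde k}$ after the deviation'' refer to potentially different indices even when one set is contained in the other. Lemma~\ref{lemma:lexicographic-improvement} has already absorbed this subtlety, so in the proof of Theorem~\ref{theo:best-response converge} one only needs to invoke it as a black box. Finally, since any singleton $S = \{i\}$ is trivially contained in its own group, the SAE conditions specialized to $|S|=1$ give the AE conditions \eqref{eq:ae1}--\eqref{eq:ae2}, so the terminal state is also an AE, establishing existence of both equilibria.
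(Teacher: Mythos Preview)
Your proposal is correct and follows essentially the same approach as the paper: invoke Lemma~\ref{lemma:lexicographic-improvement} as a potential-function argument, use finiteness of the set of partitions to conclude termination, and then observe that the termination condition $\text{IR}_S=\emptyset$ for all admissible $S$ is exactly the SAE condition. The only quibble is terminological: when $G'=\emptyset$, inequality~\eqref{eq:SAS2} is not ``vacuous'' but always \emph{false} (since $U_\emptyset=-\infty$), which is precisely why~\eqref{eq:SAS1} must hold in that case---your subsequent sentence shows you understand this, so it is just a wording slip.
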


\begin{proof}
	By Lemma~\ref{lemma:lexicographic-improvement}, each iteration  results in a new state with higher $\mathbf{\Psi}(\mathbf{a})$, which is therefore ranked higher. When the dynamics terminates, no subset of agents $S$ from any existing groups has any improving updates that involve simultaneously changing their group affiliation. This exactly captures the equilibrium conditions \eqref{eq:SAS1} and \eqref{eq:SAS2}.
	Since there is only a finite number of states, the algorithm must converge to an SAE in finite time. 
\end{proof}

We next introduce a different type of algorithm that will find an SAE which
contains the most powerful group among all possible subsets of $\mathcal{N}$. We call such an SAE a PSAE.
\begin{algorithm}
	\SetAlgoLined
	\LinesNumbered
	\SetKwInOut{Input}{Input}
	\SetKwRepeat{Do}{do}{while}
	\Input{$\mathcal{N}_1=\{1,2,\dots,n \}$}
	$k\leftarrow 1$\;
	\While{$\mathcal{N}_k \neq \emptyset$}{
		Pick $G_k \in \operatorname*{argmax}_{G \subseteq \mathcal{N}_k} U_G(\mathbf{r},\mathbf{x})$ which has the largest number of agents (otherwise, break ties arbitrarily)\;
		Set $\sigma_i = k$ for all $i \in G_k$\;
		$\mathcal{N}_{k+1} \leftarrow \mathcal{N}_{k} \setminus{G_{k}}$\;
		$k\leftarrow k+1$\;
	} 	
	\caption{Constructing a PSAE} 
	\label{al:SAE}
\end{algorithm}

When there are multiple groups with the same number of agents tied for the largest utility at iteration $k$ of Algorithm~\ref{al:SAE}, the arbitrary tie-breaker leads to possibly different less powerful groups for the remaining iterations, resulting in different values for the vector $\mathbf{\Psi} (\cdot)$.

Notice that if the tie-breaking rule is not invoked in any iteration, then the resulting PSAE is one of the highest-ranked states according to the lexicographical ordering. Also, note that all highest-ranked states are SAEs by Lemma~\ref{lemma:lexicographic-improvement}.

\begin{theo}\label{theo:Algorithm2 converges}
	Algorithm~\ref{al:SAE} construct a PSAE.
\end{theo}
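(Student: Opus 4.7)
The plan is to verify the two components of the PSAE definition: (i) that the resulting partition contains the most powerful group over all subsets of $\mathcal{N}$, and (ii) that the partition is an SAE, i.e., satisfies conditions \eqref{eq:SAS1} and \eqref{eq:SAS2} of the excerpt. Component (i) is immediate from Line 3 of Algorithm~\ref{al:SAE} at $k=1$, since $G_1 \in \operatorname*{argmax}_{G \subseteq \mathcal{N}} U_G(\mathbf{r},\mathbf{x})$.

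For component (ii), I would fix any produced group $G_k$, any nonempty subset $S\subseteq G_k$, and any target group $G_{k'}$ (allowing $G'=\emptyset$), and show that at least one of \eqref{eq:SAS1}, \eqref{eq:SAS2} holds. The analysis splits into cases based on the relative position of $k'$ and $k$ in the order of selection.

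Case 1 ($k'>k$, or $G'=\emptyset$). Since $G_{k'}$ is chosen after $G_k$, we have $G_{k'}\subseteq \mathcal{N}_{k'}\subseteq \mathcal{N}_k$; combined with $S\subseteq G_k\subseteq \mathcal{N}_k$, this yields $G_{k'}\cup S\subseteq \mathcal{N}_k$. Because $G_k\in\operatorname*{argmax}_{G\subseteq \mathcal{N}_k}U_G(\mathbf{r},\mathbf{x})$, we conclude $U_{G_k}(\mathbf{r},\mathbf{x})\geq U_{G_{k'}\cup S}(\mathbf{r},\mathbf{x})$, which is exactly \eqref{eq:SAS1}. For $G'=\emptyset$, the same argmax property gives $U_{G_k}\geq U_S$, and \eqref{eq:SAS2} is irrelevant since $U_\emptyset=-\infty$.

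Case 2 ($k'<k$). Now $G_k$ is chosen after $G_{k'}$, so $G_k\subseteq \mathcal{N}_{k'}$, hence $S\subseteq \mathcal{N}_{k'}$, and $S$ is disjoint from $G_{k'}$ since the algorithm removes chosen groups from the pool. Therefore $G_{k'}\cup S\subseteq \mathcal{N}_{k'}$, and the argmax property at iteration $k'$ yields $U_{G_{k'}}(\mathbf{r},\mathbf{x})\geq U_{G_{k'}\cup S}(\mathbf{r},\mathbf{x})$. The main obstacle in the proof is upgrading this weak inequality to the \emph{strict} inequality required by \eqref{eq:SAS2}. This is where the size-based tiebreaker in Line 3 of Algorithm~\ref{al:SAE} is essential: since $S\neq\emptyset$ and $S\cap G_{k'}=\emptyset$, we have $|G_{k'}\cup S|>|G_{k'}|$, so if utilities were equal the algorithm would have selected $G_{k'}\cup S$ (or a group of at least that size) over $G_{k'}$, contradicting our choice at iteration $k'$. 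Hence $U_{G_{k'}}(\mathbf{r},\mathbf{x})>U_{G_{k'}\cup S}(\mathbf{r},\mathbf{x})$, giving \eqref{eq:SAS2}.

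Combining the cases shows that every candidate group deviation is blocked either by the deviating subset or by the receiving group, so the output is an SAE; together with (i), it is a PSAE. The only nontrivial point is the strict inequality in Case 2, which hinges entirely on the "largest number of agents" tiebreaker built into the algorithm.
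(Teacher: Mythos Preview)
Your proof is correct and follows the same two-case strategy as the paper's proof: agents in $G_k$ have no incentive to move to later groups (since $G_k$ maximizes utility over $\mathcal{N}_k$), and are not accepted by earlier groups (since $G_{k'}$ maximizes over $\mathcal{N}_{k'} \supseteq G_{k'}\cup S$). Your treatment is in fact more careful than the paper's, which glosses over the strict inequality in \eqref{eq:SAS2}; you correctly identify that the ``largest number of agents'' tiebreaker in Line~3 is exactly what rules out $U_{G_{k'}} = U_{G_{k'}\cup S}$.
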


\begin{proof}
	Notice that $n = |\mathcal{N}_1| > |\mathcal{N}_2| > \cdots$; hence, the algorithm stops after some finite number of iterations. Assume that the algorithm stops after $K \leq n$ iterations. We claim that for each $k \in \{1,2,\dots,K\}$, the individuals in group $G_k$ cannot join $G_{k'}$ with $k' < k$, and they have no incentive to join $G_{k'}$ with $k' > k$.
	
	In the $k$th iteration, the individuals in group $G_k$ have the maximum possible group utility among all subsets of $\mathcal{N}_k$. Hence, no subset of $\mathcal{N}_k\setminus G_k$ can join $G_k$ and increase its utility. Similarly, individuals in $G_k$ do not have any incentive to deviate to any other subset of $\mathcal{N}_k$. Hence, the claim follows.	
\end{proof}	

\begin{prop}\label{prop:complexity}
	The complexity of Algorithm~\ref{al:SAE} is $\mathcal{O}(n^3)$ for group power functions of the form  $P_G(R_G, D_G)$, where $D_G = \max_{i,j \in G} \Vert{x_i - x_j}\Vert_2$.
\end{prop}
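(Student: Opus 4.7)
The plan decomposes the bound into two parts: the number of outer while-loop iterations, and the work done per iteration. For the first, each iteration picks a non-empty group $G_k$ (at worst a singleton, which is always a feasible argmax candidate) and removes its members from $\mathcal{N}_k$, so $|\mathcal{N}_k|$ strictly decreases and the outer loop terminates in at most $n$ iterations.

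The bulk of the work in each iteration is the call $\operatorname*{argmax}_{G \subseteq \mathcal{N}_k} P_G(R_G, D_G)$. My idea is to parameterize the search by the pair attaining the diameter. Since $D_G = \max_{i,j \in G}\Vert x_i - x_j \Vert_2$ is realized by some pair in $G$, I would enumerate all $O(n^2)$ pairs $(i, j) \in \mathcal{N}_k$, and for each such pair with $d_{ij} = \Vert x_i - x_j \Vert_2$ form the candidate
\[
G_{ij} \;:=\; \{\, l \in \mathcal{N}_k : \max(\Vert x_l - x_i \Vert_2, \Vert x_l - x_j \Vert_2) \leq d_{ij}\,\},
\]
i.e., the maximal set that could have $(i,j)$ as a diameter pair. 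Correctness rests on the observation that if $G^*$ is the optimum with diameter attained by $(i^*, j^*)$, then every $l \in G^*$ lies within $d_{i^* j^*}$ of both anchors, so $G^* \subseteq G_{i^* j^*}$; monotonicity of $P$ in $R$ combined with the tie-breaker favoring the largest group then forces $G^* = G_{i^* j^*}$ whenever the actual diameter of $G_{i^* j^*}$ equals $d_{i^* j^*}$.

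For the complexity, I would pre-compute all pairwise distances in $O(n^2)$ and then pre-compute $R_{G_{ij}}$ for all $O(n^2)$ pairs in $O(n^3)$ (each pair needs an $O(n)$ scan). Across iterations, the $R_{G_{ij}}$ values are maintained incrementally: when an agent $l$ is placed into $G_k$ and removed from $\mathcal{N}_{k+1}$, I subtract $r_l$ from every $R_{G_{ij}}$ with $l \in G_{ij}$. Since $\sum_k |G_k| = n$ and each removed agent touches at most $O(n^2)$ pair values, the aggregate maintenance cost is $O(n^3)$, and the per-iteration $O(n^2)$ scan across pairs to identify the argmax also sums to $O(n^3)$ overall, giving the claimed bound.

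The main obstacle is tightening the correctness argument in the case $D_{G_{ij}} > d_{ij}$, which can occur in dimension $\geq 2$ (for instance, four points forming two equilateral triangles sharing an edge): the naive $G_{ij}$ is then not a genuine diameter-$d_{ij}$ set, and one must argue that the true optimum is nevertheless captured by enumerating the \emph{actual} diameter pair of the optimal group. I expect this to go through via a fixed-point-style argument: $G^*$ satisfies $G^* = G_{i^* j^*}$ for its own diameter pair $(i^*, j^*)$, so some pair in the enumeration recovers $G^*$ exactly, while spurious candidates with $D_{G_{ij}} > d_{ij}$ can be detected and discarded with the cached distances.
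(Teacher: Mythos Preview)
Your overall strategy---enumerate diametral pairs, build a maximal candidate group for each pair, and bound the total work by $O(n^3)$---is exactly the approach the paper takes (the paper states that the number of candidates for $G_k$ is $|\mathcal{N}_k|+\binom{|\mathcal{N}_k|}{2}$ and sums over iterations). Your incremental maintenance of the $R_{G_{ij}}$ values is a nice sharpening that the paper does not spell out.

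However, the ``main obstacle'' you flag is real, and your proposed fixed-point resolution does \emph{not} go through. It is false in dimension $d\geq 2$ that the optimum $G^{*}$ equals $G_{i^{*}j^{*}}$ for its own diameter pair $(i^{*},j^{*})$. Take
\[
a=(0,0),\quad b=(1,0),\quad c=(0.5,0.8),\quad d=(0.5,-0.8),
\]
so that $\Vert a-b\Vert=1$, $\Vert a-c\Vert=\Vert b-c\Vert=\Vert a-d\Vert=\Vert b-d\Vert=\sqrt{0.89}<1$, and $\Vert c-d\Vert=1.6$. With resources $r_a=r_b=r_d=1$, $r_c=10$ and any power function that is increasing in $R$ and penalises diameter above $1$ sharply (e.g.\ $P(R,D)=R-1000\max(0,D-1)$), the unique maximiser is $G^{*}=\{a,b,c\}$ with diameter $1$ attained only at $(a,b)$. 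But $G_{ab}=\{a,b,c,d\}$ has diameter $1.6>1$, so you discard it; and for every other pair the candidate set is either $\{a,b,c,d\}$ or a two-element set. Thus $\{a,b,c\}$ never appears among your $O(n^2)$ candidates at all. The claim ``$G^{*}=G_{i^{*}j^{*}}$ for its own diameter pair'' therefore fails, and with it the correctness of the per-iteration $\operatorname{argmax}$.

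In short: your decomposition into iterations and your complexity accounting are fine and match the paper, but the correctness argument for the candidate enumeration breaks in $d\geq 2$; the paper's own proof is quite terse at exactly this point (it speaks of ``the ball given by its diameter'' without making precise why a pair determines the optimal group), so you are right to be suspicious---but the fix you conjecture is not the right one.
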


\begin{proof}
	Since $D_G$ ($|G| > 2$) only depends on the diameter of the point set, any individual that can join the group $G$ without changing its diameter is approved by all agents in group $G$. 
	
	By definition, $G_k$ has the maximum group utility among all subsets of $\mathcal{N}_k$ at iteration $k$.
	Hence, $G_k$ is either a singleton group, or it includes everyone inside the ball given by its diameter $D_{G_k}$. As a result,
	we only need to check singleton groups, and the groups inside the ball given by any pair of individuals. In particular, the number of candidates for $G_k$ is $\mathcal{N}_k + \binom{\mathcal{N}_k}{2}$. Hence, the maximum complexity of the algorithm is given as follows:
	\[
	\sum_{k} \binom{\mathcal{N}_k}{2} + \mathcal{N}_k \leq n\cdot \binom{n}{2} + n = \mathcal{O}(n^3).
	\qedhere
	\popQED
	\]
\end{proof}

\subsection{Non-Uniqueness of Equilibria}
While Theorem~\ref{theo:best-response converge} and Theorem~\ref{theo:Algorithm2 converges} guarantee the existence of an SAE, the SAE may not be unique. This also implies that there may not be a unique AE. We show this in the following example.

\begin{exmp}\label{eg:SAE not unique}
	Consider three heterogeneous agents on the line. Their locations are $x_1 = 0$, $x_2 = 0.6$, and $x_3=1.2$, and their resources are $r_1 = 2$, $r_2 = 1$, and $r_3 = 2$. 
	The group utility function is $U_G = \left(\sum_{i \in G} r_i\right) / \left(1 + \max_{i,j \in G} \lvert {x_i - x_j}\rvert\right)$.
	
	In this example, simple calculations show that a single large group and groups of isolated individuals are both SAEs. On the other hand, for the three partitions of $\{1,2,3\}$ into two sets, there is always an agent who prefers to deviate, resulting in either the partition $\{\{1\}, \{2\}, \{3\}\}$ or $\{\{1,2,3\}\}$.
\end{exmp}



\section{Structural Properties}\label{sec: Structural Properties}
Having established the existence of equilibria, we now turn to their properties.
In particular, we are interested in the combinatorial structure of overlap between different groups' ''territories.'' We formally define the following:

\begin{defi} \label{def:territory}
	The \emph{territory} of a group $G$, denoted by $X_G$, is the convex hull of its members' locations:
	\begin{align*}
	X_G & = \{\sum_{i \in G} a_i x_i|\sum_{i \in G} a_i = 1 \text{ and } a_i \geq 0 \text{ for all } i\}~. 
	\end{align*}
	\emph{Individuals inside the territory} of group $G$, denoted by $T_G$, are defined as
	$T_G = \{i\in\mathcal{N}|x_i \in X_G\}$.
\end{defi}

It is possible that $i \in T_G$ even when $i \notin G$; in fact, such structures are of particular interest to us. For the remainder of this section, we will adopt the following two additional assumptions. 

\begin{assumption}\label{asp:hull}
	The group coverage function $D_G$ depends only on the group's territory in the following sense: if $i \in T_G$, then $D_{G\cup\{i\}} = D_G$.
\end{assumption} 
In Section~\ref{sec:model}, we described the maximum pairwise distance, the volume of the convex hull, and the surface of the convex hull as three natural examples of group coverage.
Notice that all these examples satisfy Assumption~\ref{asp:hull}. 
\begin{assumption}\label{asp:utility}
	The group utility is of the hedonic form 
	\begin{align} \label{eq:utility-general2}
	U_{G} & = f(R_{G}, D_{G}), 
	\end{align}
	where the function $f$ is strictly increasing in $R_G$ and strictly decreasing in $D_G$. 
\end{assumption}

\subsection{Types of Structures in AE} 
We begin by defining four types of overlap (or lack thereof) between pairs of groups.
\begin{defi} \label{def:group-overlaps}
	Let $G, G'$ be two groups.
	\begin{itemize}
		\item $G$ and $G'$ are \emph{non-overlapping} if their territories are disjoint, i.e., $X_G \cap X_{G'} = \emptyset$.
		\item We say that $G$ \emph{encroaches on} $G'$ if $G$ has at least one member in the territory of $G'$, i.e., $G \cap T_{G'} \neq \emptyset$.
		\item $G$ and $G'$ are \emph{mutually encroaching} if $G \cap T_{G'} \neq \emptyset $ and $G' \cap T_{G} \neq \emptyset$.
		\item $G$ and $G'$ are \emph{nested} if all members of $G$ are located within the territory of $G'$, i.e., $G \subseteq T_{G'}$.
	\end{itemize}
\end{defi}

The four types of relationships are illustrated in Figure~\ref{fig:non-overlapping}. Notice that non-mutual encroachment between groups in one dimension can only occur when one group is nested inside the other; however, in higher dimension, non-mutual encroachment can happen even when the groups are not nested.

\begin{figure}[ht!]
	\centering
	\includegraphics[width=0.45\columnwidth]{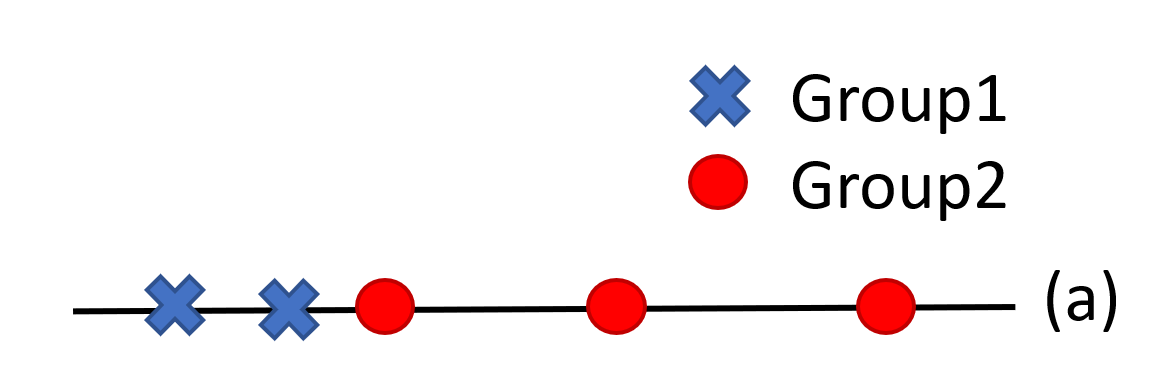}
	\hspace{6pt} 
	\includegraphics[width=0.45\columnwidth]{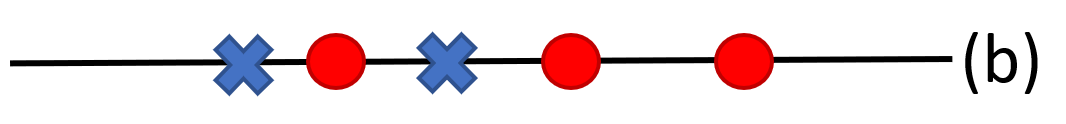}
	
	\includegraphics[width=0.45\columnwidth]{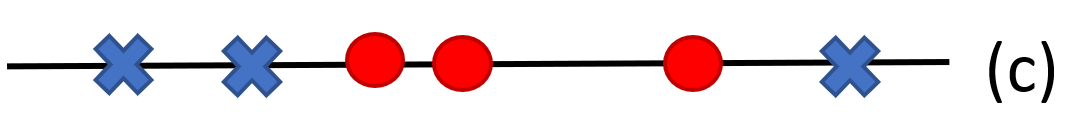}
	\hspace{6pt} 
	\includegraphics[width=0.45\columnwidth]{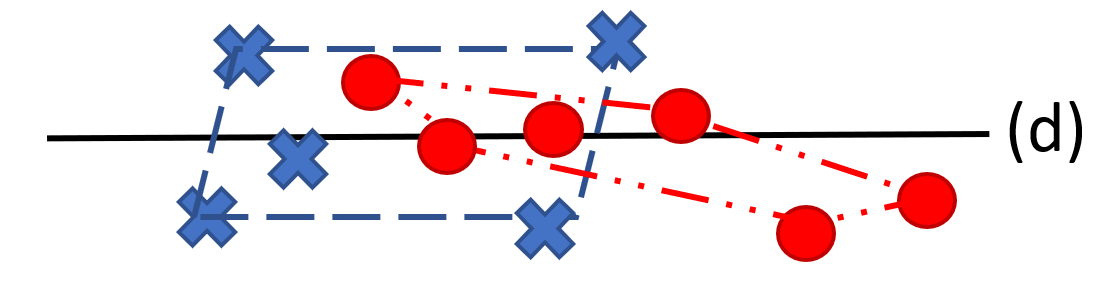}
	\vspace{-4pt} 
	\caption{Two groups that are (a) non-overlapping, (b) mutually-encroaching, (c) nested, in 1D, and (d) one encroaching on another in 2D. \label{fig:non-overlapping}}
\end{figure}

The following proposition states that mutual encroachment cannot occur in any AE.

\begin{prop} \label{prop:struc2not-NE/AE}
	There is no AE in which two groups are mutually encroaching.
\end{prop}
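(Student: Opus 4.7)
The plan is to proceed by contradiction: I will assume that $G$ and $G'$ are two mutually encroaching groups in some AE and exhibit a single-agent deviation that both the migrating agent and the receiving group strictly prefer, thereby violating the AE conditions \eqref{eq:ae1}--\eqref{eq:ae2}. The driving mechanism is a simple consequence of Assumptions~\ref{asp:hull} and~\ref{asp:utility}: absorbing into a group an agent whose location already lies inside its territory leaves the coverage $D_G$ unchanged but strictly increases the resource total $R_G$, and hence (by strict monotonicity of $f$ in its first argument) strictly increases that group's utility.

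First, I would invoke the symmetry of the mutual-encroachment relation to assume without loss of generality that $U_G(\mathbf{r},\mathbf{x}) \geq U_{G'}(\mathbf{r},\mathbf{x})$. Mutual encroachment then guarantees some $j \in G' \cap T_G$, i.e.\ an agent of $G'$ whose location lies in $X_G$. Second, I would examine the hypothetical group $G \cup \{j\}$: since $j \in T_G$, Assumption~\ref{asp:hull} gives $D_{G \cup \{j\}} = D_G$, and since the groups of a feasible profile are disjoint we have $j \notin G$, so $R_{G \cup \{j\}} = R_G + r_j > R_G$. Assumption~\ref{asp:utility} then yields $U_{G \cup \{j\}}(\mathbf{r},\mathbf{x}) > U_G(\mathbf{r},\mathbf{x})$. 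Chaining this with the WLOG choice gives $U_{G \cup \{j\}} > U_G \geq U_{G'}$. Finally, I would test the AE conditions for $j$ moving from $G'$ to $G$: \eqref{eq:ae1} demands $U_{G'} \geq U_{G \cup \{j\}}$, which fails, and \eqref{eq:ae2} demands $U_G > U_{G \cup \{j\}}$, which also fails. Hence $j$ strictly benefits and $G$ strictly prefers to accept $j$, contradicting the AE hypothesis.

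No step is technically deep; the only genuine subtlety is the direction of the WLOG. If one instead tried to relocate $i \in G \cap T_{G'}$ into $G'$ when $U_G \geq U_{G'}$, the analogous computation would only give $U_{G' \cup \{i\}} > U_{G'}$, which need not exceed $U_G$, so agent $i$'s strict preference to deviate is not guaranteed and the argument stalls. Routing the deviation \emph{towards} the stronger (or tied) of the two groups is exactly what produces the strict chain of inequalities that simultaneously kills both AE escape clauses.
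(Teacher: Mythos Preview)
Your proof is correct and follows essentially the same approach as the paper: assume two mutually encroaching groups, take (without loss of generality) the one with higher utility, and show that the agent from the other group lying in its territory constitutes a profitable, acceptable deviation via Assumptions~\ref{asp:hull} and~\ref{asp:utility}. The paper's argument is identical up to the labeling of $G$ and $G'$, and your explicit discussion of why the WLOG direction matters is a nice clarification that the paper leaves implicit.
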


\begin{proof}
	Assume that there is an AE with group affiliation profile $\mathbf{\sigma}^*$ in which there are two groups $G$ and $G'$ such that there exist agents $i \in G \cap T_{G'}$ and $j \in G' \cap T_{G}$. Without loss of generality, assume that $f(R_{G'}, D_{G'}) \geq f(R_{G}, D_{G})$.
	
	Since $i \in T_{G_{k'}}$, by Assumption~\ref{asp:hull}, we have $D_{G'} = D_{G' \cup \{i\}}$. Hence, $f(R_{G' \cup \{i\}}, D_{G' \cup \{i\}}) > f(R_{G'}, D_{G'})$ since $R_{G' \cup \{i\}} > R_{G'}$. Therefore, $f(R_{G' \cup \{i\}}, D_{G' \cup \{i\}}) > f(R_{G}, D_{G})$, so agent $i$ has an acceptable beneficial deviation to group $G'$. This contradicts the assumption that $\mathbf{\sigma}^*$ is an AE.
\end{proof}

The other types of group interactions may exist in an AE. The following are some examples:
\begin{itemize}
	\item Non-overlapping groups: this occurs when groups are far from each other; thus, no one has any incentive to deviate.
	\item Nested Structure: this occurs when a group with high resources is located within a much weaker group. Agents of the weaker group cannot deviate, as they will reduce the other group's utility by enlarging its territory.
	\item Non-nested one-way encroaching structure: two groups may overlap as shown in Figure~\ref{fig:non-overlapping}(d), where if the red group has higher utility, its agents may not want to deviate to the blue group.
\end{itemize}

\subsection{ Encroachment Structures in AE} 
We are now ready to attack our main question: what types of overlaps can occur globally? In particular, we are interested in what types of encroachment relations can occur between the groups at equilibrium.
To characterize these relations, we define the encroachment graph.

\begin{defi} \label{def:encroachment-graph}
	Given an AE $\mathbf{a}^{*}$ and its group partition $G_1, \ldots, G_m$, we define the directed \emph{encroachment graph} $\mathcal{G}(\mathbf{a}^{*})$ as follows:
	the $m$ nodes $V(\mathbf{a}^{*})$ are the groups $G_1, \ldots, G_m$, and there is a directed edge from $G_i$ to $G_j$ if and only if $G_i$ encroaches on $G_j$.
\end{defi}

\begin{prop}\label{theo:directed graph}
	For every AE $\mathbf{a}^{*}$, the encroachment graph $\mathcal{G}(\mathbf{a}^{*})$ is acyclic.
\end{prop}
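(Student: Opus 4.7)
The plan is to strengthen Proposition~\ref{prop:struc2not-NE/AE} into a monotonicity statement about utilities along the edges of $\mathcal{G}(\mathbf{a}^*)$, from which acyclicity follows immediately. Specifically, I will show that in any AE, whenever $G$ encroaches on $G'$, we must have $U_G > U_{G'}$. This is a direct extension of the argument used in the proof of Proposition~\ref{prop:struc2not-NE/AE}: if there is an agent $i \in G \cap T_{G'}$ and it were the case that $U_{G'} \geq U_G$, then by Assumption~\ref{asp:hull} we have $D_{G' \cup \{i\}} = D_{G'}$, while $R_{G' \cup \{i\}} = R_{G'} + r_i > R_{G'}$. By Assumption~\ref{asp:utility}, $f$ is strictly increasing in its first argument, so
\begin{align*}
U_{G' \cup \{i\}} \;=\; f(R_{G' \cup \{i\}}, D_{G' \cup \{i\}}) \;>\; f(R_{G'}, D_{G'}) \;=\; U_{G'} \;\geq\; U_G,
\end{align*}
which means agent $i$ strictly prefers joining $G'$, and $G'$ strictly prefers to accept $i$. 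This violates both AE conditions~\eqref{eq:ae1} and~\eqref{eq:ae2} simultaneously, a contradiction.

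Once this monotonicity claim is established, the proof of the proposition is essentially one line: if $\mathcal{G}(\mathbf{a}^*)$ contained a directed cycle $G_{j_1} \to G_{j_2} \to \cdots \to G_{j_k} \to G_{j_1}$, chaining the strict inequality along this cycle would yield
\begin{align*}
U_{G_{j_1}} \;>\; U_{G_{j_2}} \;>\; \cdots \;>\; U_{G_{j_k}} \;>\; U_{G_{j_1}},
\end{align*}
which is impossible. Hence $\mathcal{G}(\mathbf{a}^*)$ is a DAG.

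There is no real obstacle in this argument; the only thing to be careful about is to avoid conflating the encroachment relation with containment or mutual encroachment. In particular, the case of equal utilities $U_G = U_{G'}$ must be included in the forbidden configurations, since the encroacher's agent will produce a strict improvement (strict monotonicity of $f$ in $R$ together with the unchanged diameter). This covers precisely the same gap that Proposition~\ref{prop:struc2not-NE/AE} handled for $2$-cycles, and the same reasoning extends to any edge of the encroachment graph without modification. The overall structure of the proof is therefore: (i) establish strict utility decrease along any edge, using Assumptions~\ref{asp:hull} and~\ref{asp:utility}; (ii) observe that a strictly decreasing utility function on the nodes cannot support a directed cycle.
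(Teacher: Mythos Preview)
Your proposal is correct and follows essentially the same approach as the paper: both arguments show that any edge $G \to G'$ in the encroachment graph forces $U_G > U_{G'}$ (since $G'$ would strictly accept the encroaching agent, the AE condition~\eqref{eq:ae1} must hold, giving $U_G \geq U_{G'\cup\{i\}} > U_{G'}$), and then conclude that a strictly decreasing potential along edges rules out directed cycles. The only cosmetic difference is that you phrase the edge-monotonicity step as a contradiction from $U_{G'} \geq U_G$, whereas the paper derives $U_G > U_{G'}$ directly from the failure of~\eqref{eq:ae2}.
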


\begin{proof}
	Let $G, G'$ be two arbitrary groups such that $\mathcal{G}(\mathbf{a}^{*})$ contains the directed edge $(G,G')$, i.e., $G$ encroaches on $G'$.
	By definition, this means that $G \cap T_{G'} \neq \emptyset$.
	Let $i \in G \cap T_{G'}$ be an arbitrary agent.
	Because the convex hull of $G'$ is the same as the convex hull of $G' \cup \{i\}$, we obtain that $D_{G' \cup \{i\}} = D_{G'}$, by Assumption~\ref{asp:hull}. 
	
	Because $R_{G' \cup \{i\}} > R_{G'}$, we obtain that $U_{G' \cup \{i\}} > U_{G'}$, i.e., the group $G'$ would strictly prefer to accept $i$.
	
	The fact that $\mathbf{a}^{*}$ is an AE therefore implies that $i$ does not want to deviate and join $G'$. This means that $U_G \geq U_{G' \cup \{i\}} > U_{G'}$. We have thus shown that whenever there is a directed edge from $G$ to $G'$, the utility of $G$ is strictly higher than that of $G'$. As a result, $\mathcal{G}(\mathbf{a}^{*})$ cannot contain any cycles.
\end{proof}

Next, we show that directed acyclic graphs (DAGs) precisely characterize encroachment relationships. 
We show that given any DAG $\mathcal{G}$, there exists a system of agents, locations, resources, utility functions, and an AE $\mathbf{a}^{*}$ such that $\mathcal{G}(\mathbf{a}^{*}) = \mathcal{G}$.
We do this by explicitly constructing the system, as described below.

In fact, we show that such a construction is always possible in any fixed dimension $d \geq 2$, and for very general classes of group utility functions.
Specifically, we will show this for any group utility function $U_G = f(R_G,D_G)$, whenever
\begin{enumerate}
	\item $D_G$ is a strictly increasing function of $X_G$, i.e., for any $G, G' \subseteq \mathcal{N}$ with $X_{G} \subsetneq X_{G'}$, we have $D_{G} < D_{G'}$.
	\item $f(R,D)$ grows unboundedly in $R$, and 
	\item for all $r,D,\delta > 0$,
	\begin{align}
	\limsup_{R\to\infty} \frac{f(R,D)}{f(R+r,D+\delta)} & > 1. \label{eq:condf}
	\end{align}                                                         
\end{enumerate}
We will show in Proposition~\ref{prop:function-condition} below that a wide class of functions satisfy these conditions.

Consider the following construction procedure, which is given an arbitrary DAG $\mathcal{G}$ with $m$ nodes as input: 
\begin{enumerate}
	\item Let $\epsilon$ be a very small constant.
	\item Let $v_1, \ldots, v_m$ be a topological sorting of nodes of $\mathcal{G}$.
	\item Let $\{\ell_1,\ell_2,\ldots,\ell_{m}\}$ be  $m$ pairwise intersecting line segments in $\mathbb{R}^d$ of the same length. (Each line segment corresponds to a group.)
	\item For each line segment, locate $d$ agents in an $\epsilon > 0$ neighborhood of each its endpoints, so that the convex hull of individuals in the $\epsilon$-neighborhood of each line segment is non-degenerate, and the convex hulls of the agents for different line segments are identical up to translation and rotation.
	\item For every pair $i, j$ such that $(v_i,v_j) \in E(\mathcal{G})$, locate an agent at the point where $\ell_i$ and $\ell_j$ intersect.
	\item Inductively, for every $k > 1$, define $G_k$ to be the set of all agents that are in the $\epsilon$-neighborhood of $\ell_k$ and are not in $G_{k'}$ for any $k' < k$.
	($\epsilon$ is chosen small enough so that all of the designated nodes near the endpoints of $\ell_k$ will be in $G_k$.)
	\item Add individuals to each group (without changing its convex hull) so that $|G_1| > |G_2| > \cdots > |G_m|$.
	\item Assign the resources $r_k$ to the agents of $G_k$ for $k \in \{1,\ldots,m\}$ so that the resulting configuration is an AE.
\end{enumerate}

The key step in the proof is to show that we can define the resources $(r_1, r_2, \ldots, r_m)$ so that individuals in $G_k$ do not have an incentive to join $G_{k'}$ for $k' > k$, and so that they are not allowed to join $G_{k'}$ for $k'<k$.

\begin{lemma} \label{lemma:can-pick-resources}
	In the preceding construction, under Assumptions~\ref{asp:hull} and \ref{asp:utility}, we can pick $(r_1,r_2,\ldots,r_m)$ so that the resulting configuration is an AE.
\end{lemma}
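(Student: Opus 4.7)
The plan is to exploit the geometric symmetry of the construction. By step 4, all line segments $\ell_k$ are congruent and the endpoint neighborhoods are identical up to rigid motion; every intersection agent and every extra from step 7 lies on (hence inside the convex hull of) the associated $\ell_k$. Consequently the coverage $D^* := D_{G_k}$ is common to all groups, and for any agent $i \in G_k$ one has $x_i \in X_{G_{k'}}$ ($k' \neq k$) if and only if $i$ sits at the intersection $\ell_k \cap \ell_{k'}$, which by the procedure requires $(v_k, v_{k'}) \in E(\mathcal{G})$ with $k < k'$. The AE constraints then split into two cases. In the intersection case, $D_{G_{k'} \cup \{i\}} = D^*$, so $G_{k'}$ strictly benefits from accepting $i$ and the AE condition reduces to $(n_k - 1)\, r_k \geq n_{k'}\, r_{k'}$, with $n_k := |G_k|$. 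In the non-intersection case, the coverage strictly increases by some $\delta_{i,k'} > 0$, and it suffices that either the agent prefers her current group, $f(n_k r_k, D^*) \geq f(n_{k'} r_{k'} + r_k, D^* + \delta_{i, k'})$, or the target group rejects her, $f(n_{k'} r_{k'}, D^*) > f(n_{k'} r_{k'} + r_k, D^* + \delta_{i, k'})$.

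Next I would pick $r_m, r_{m-1}, \ldots, r_1$ by backward induction on $k$. At step $k$, with $r_{k+1}, \ldots, r_m$ already set, I would choose $r_k$ large enough that both (i) $(n_k - 1)\, r_k \geq n_{k'}\, r_{k'}$ holds for every $k' > k$ (covering the intersection case for every edge $(v_k, v_{k'})$ and, by monotonicity of $f$ in $R$ and $D$, also the ``stay'' clause of the non-intersection case for $k < k'$), and (ii) $R_{G_k} = n_k r_k$ lies in the intersection $\bigcap_{k' > k,\, i \in G_{k'}} \{R : f(R, D^*) > f(R + r_{k'}, D^* + \delta_{i, k})\}$, so that $G_k$ rejects every would-be joiner from a later group. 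The residual constraints -- agents in $G_k$ deviating into an earlier $G_{k'}$ -- are deferred and handled when the iteration reaches step $k'$: by then all $r_k$ with $k > k'$ are known, and clause (ii) at that step guarantees the required rejection. Singleton deviations reduce to $f(n_k r_k, D^*) \geq f(r_k, 0)$, which is achieved by inflating $n_k$ through the extras of step 7, since $f$ grows unboundedly in $R$ by condition (2).

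The main obstacle is securing clause (ii): condition (3) asserts only that $\limsup_{R \to \infty} f(R, D)/f(R + r, D + \delta) > 1$ for each fixed triple, so individual rejection sets are unbounded, but from $\limsup > 1$ alone one cannot conclude that the finite intersection of such sets is unbounded. However, for the ``wide class'' of utility functions characterized in Proposition~\ref{prop:function-condition}, the ratio $f(R, D)/f(R + r, D + \delta)$ is eventually strictly greater than $1$ for every fixed $(r, D, \delta)$, so each rejection set contains a tail $(R_0(r, \delta), \infty)$; the intersection over finitely many triples then also contains a tail, and any sufficiently large $r_k$ satisfies both (i) and (ii). With this delicate selection in place, the remaining AE verifications reduce to routine monotonicity comparisons of $f$, completing the argument.
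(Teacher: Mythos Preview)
Your overall strategy matches the paper's: exploit the geometric symmetry so that all $D_{G_k}$ coincide, then assign $r_m,\ldots,r_1$ by backward induction so that agents in $G_k$ prefer to stay over joining any later group and are rejected by any earlier group. The substantive gap is the one you flag yourself: clause~(ii) asks $n_k r_k$ to lie in a \emph{finite intersection} of rejection sets, and from $\limsup>1$ alone that intersection need not be unbounded. Your fallback to the separable class of Proposition~\ref{prop:function-condition} does not rescue the argument, for two reasons. First, the lemma is stated under condition~\eqref{eq:condf}, not under the extra hypotheses of that proposition, so you would be proving a strictly weaker statement. Second, even for that class your claim ``$f(R,D)/f(R+r,D+\delta)$ is eventually $>1$'' is false: the proof of Proposition~\ref{prop:function-condition} establishes only $\limsup_{R\to\infty} g(R)/g(R+r)=1$, not $\lim=1$, so the ratio can drop below $h(D)/h(D+\delta)$ infinitely often (e.g.\ $g(R)=2^{\lfloor\log_2 R\rfloor}$).

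The paper closes this gap by a monotonicity reduction that collapses the finitely many constraints into one. Fix $\delta:=\min_{k}\min_{k'>k,\,j\in G_{k'}}\bigl(D_{G_k\cup\{j\}}-D_{G_k}\bigr)>0$ once and for all, and note that the inductively chosen resources satisfy $r_{k+1}\ge r_{k'}$ for every $k'>k$. Then for any $k'>k$ and $i\in G_{k'}$, monotonicity of $f$ in each argument gives
\(
f(R+r_{k'},\,D^*+\delta_{i,k})\le f(R+r_{k+1},\,D^*+\delta),
\)
so every one of your rejection sets contains the single set $\{R:f(R,D^*)>f(R+r_{k+1},D^*+\delta)\}$. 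One application of~\eqref{eq:condf} with the fixed triple $(r_{k+1},D^*,\delta)$ then produces arbitrarily large $R$ in that set; taking $r_k=R/n_k$ satisfies clause~(ii) for all triples simultaneously, and since $R$ may be taken as large as desired, clause~(i) comes for free. This is the missing idea; with it there is no need to invoke any stronger tail behavior.

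A smaller issue: your handling of singleton deviations (``inflate $n_k$ through step~7'') is circular as written, since step~7 fixes the $n_k$ \emph{before} step~8 assigns the $r_k$.
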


\begin{proof}
	Let $\mathbf{a}^{*}$ be the partition produced by the given procedure, and let $n_i = |G_i|$ for all $i$.
	For each $k$, let
	\begin{align*}
	\delta_k & =  \min_{k' > k, j \in G_{k'}}(D_{G_k \cup \{j\}} - D_{G_k})
	\end{align*}
	be the smallest change to $D_{G_k}$ caused by any agent $j \in G_{k'}$ for $k' >k$ joining $G_k$.
	
	Since $D_G$ is a strictly increasing function of $X_G$ by assumption, and each $j \in G_{k'}$ for $k' > k$ lies outside $X_{G_k}$ 
	by construction, $\delta_k > 0$ for all $k$.
	Define $\delta = \min_{k} \delta_k$.
	Because the convex hulls of all groups are identical up to rotation and translation, we have that $D_{G_1} = D_{G_2} = \cdots = D_{G_m}$; denote this common value by $D_G$.
	The following conditions are sufficient for the given group structures to form an AE:
	\begin{itemize}
		\item Individuals are not allowed to join a group with higher utility. A sufficient condition for this is that for all $k < m$, 
		\begin{align}
		f(r_k n_k + r_{k+1}, D_G + \delta) & <  f(r_k n_k, D_G). \label{ineq:1}
		\end{align}
		Notice that the right-hand side is the current utility of $G_k$, while the left-hand side is an upper bound on the utility of $G_k \cup \{j\}$ for any $j \in G_{k'}$ with $k' > k$. 
		This is because the monotonicity of $f$, along with $r_{k+1} \geq r_{k'}$ for all $k' > k$, ensures that it is enough to consider $r_{k+1}$ here.
		\item Individuals do not have any incentive to join a group with lower utility. A sufficient condition for this is that for all $k < k' \leq m$,
		\begin{align} \label{ineq:2}
		f(r_{k'} n_{k'} + r_k, D_G + \delta) & < f(r_k n_k, D_G)~. 
		\end{align}
		Here, the right-hand side is the current utility of agents in group $k$, while the left-hand side is the new utility they would experience if joining the group $k' > k$. 
	\end{itemize}
	We assign the values $r_1 > r_2 > \cdots > r_m$ iteratively, starting with $r_m =1$.
	Given the values of $r_{i}$ for $i > k$, we want to define $r_{k}$ so that Inequalities~\eqref{ineq:1} and \eqref{ineq:2} are simultaneously satisfied.
	Notice that \eqref{ineq:2} holds whenever $r_k \geq r_{k+1}$, because $n_{k'} < n_k$ for all $k' > k$ and $f(R,D)$ is increasing in $R$ and decreasing in $D$.
	
	Because $f$ satisfies \eqref{eq:condf}, applying it with the given $\delta$, $D = D_G$, and $r = r_{k+1}$ implies that there exists a large\footnote{The fact that $R$ can be made arbitrarily large is important in that it ensures that we will be able to choose $r_k \geq r_{k+1}$.} value of $R$ with $\frac{f(R,D_G)}{f(R+r_{k+1},D_G+\delta)} > 1$. Setting $r_k = R/n_k$ then implies that $f(r_k n_k + r_{k+1}, D_G + \delta) <  f(r_k n_k, D_G)$, i.e., \eqref{ineq:1}.
	This completes the iterative construction and thus the proof.
\end{proof}

In summary, we have proved the following theorem:
\begin{theo} \label{theo:all directed graph}
	Let $d \geq 2$ be any fixed dimension.
	Assume that $D_G$ is a strictly increasing function of $X_G$. Let $U_G$ satisfy Assumption \ref{asp:utility} and $f(R_G, D_G)$ satisfy condition~\eqref{eq:condf}. Given any DAG $\mathcal{G}$, there exists a group formation game in $d$ dimensions with group utility function $f$ and an AE $\mathbf{a}^{*}$, such that the encroachment graph of $\mathbf{a}^{*}$ is $\mathcal{G}$.
\end{theo}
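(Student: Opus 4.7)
The plan is to show that the theorem follows by carefully executing the eight-step construction described immediately before the statement, verifying that (i) the encroachment graph of the resulting partition is exactly $\mathcal{G}$, and (ii) Lemma~\ref{lemma:can-pick-resources} can be invoked to choose resources that make the partition an AE. So the proof is essentially a \emph{verification} of the construction, not a new argument; Lemma~\ref{lemma:can-pick-resources} does the heavy lifting once the geometric part is set up correctly.

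First, I would address feasibility of the geometric pieces. Since $d\ge 2$, one can exhibit $m$ pairwise intersecting line segments of equal length inside a two-dimensional affine plane (e.g.\ chords of a common disk, rotated so that no two are parallel and no three are concurrent), embedded into $\mathbb{R}^d$. For each segment $\ell_k$, place $d$ points in an $\epsilon$-ball around each endpoint, chosen generically so that the convex hull of the $2d$ points is full-dimensional, and so that the resulting $2d$-point configurations associated with different segments are congruent up to rigid motion (this is what yields the common value $D_G$ used in Lemma~\ref{lemma:can-pick-resources}). Then, using a topological ordering $v_1,\dots,v_m$ of $\mathcal{G}$, place a single agent at the intersection point of $\ell_i$ and $\ell_j$ for each edge $(v_i,v_j)\in E(\mathcal{G})$ with $i<j$, and assign groups inductively: $G_k$ consists of the agents in the $\epsilon$-neighborhood of $\ell_k$ that have not already been assigned to some $G_{k'}$ with $k'<k$.

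The main obstacle is verifying that $\mathcal{G}(\mathbf{a}^{*})=\mathcal{G}$ exactly (not just as a supergraph). For each edge $(v_i,v_j)\in E$, the intersection agent is in $G_i$ (because $i<j$ in the topological order) and lies on $\ell_j$, hence inside $X_{G_j}$; therefore $G_i$ encroaches on $G_j$. Conversely, I need to argue that no unintended encroachment occurs. This breaks into two sub-claims: (a) the endpoint clusters of $G_k$ can be placed so that for all $k'\ne k$ they lie outside $X_{G_{k'}}$, by taking $\epsilon$ smaller than a quarter of the minimum distance from any line-segment endpoint to any other line segment; (b) because all non-endpoint members of $G_k$ lie \emph{on} $\ell_k$, the territory $X_{G_k}$ is contained in the $\epsilon$-thickening of $\ell_k$, so any agent in $G_i\cap T_{G_j}$ must be within $\epsilon$ of $\ell_j$, which—for small enough $\epsilon$—forces that agent to be at (or near) an intersection point of $\ell_i$ and $\ell_j$. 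Since the only such agents exist when $(v_i,v_j)$ is an edge, no spurious encroachment is created.

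Finally, I would pad each group with additional agents placed inside $X_{G_k}$ (along $\ell_k$, strictly between the endpoint clusters), which leaves convex hulls unchanged and hence does not affect the encroachment graph; this yields $|G_1|>|G_2|>\cdots>|G_m|$ as required. At this point Assumption~\ref{asp:hull} and Assumption~\ref{asp:utility} are in force, $D_{G_1}=\cdots=D_{G_m}$ by the congruence of the point configurations, and the hypothesis that $D_G$ is strictly increasing in $X_G$ (with condition~\eqref{eq:condf} on $f$) is precisely what Lemma~\ref{lemma:can-pick-resources} needs. Invoking that lemma supplies resources $(r_1,\ldots,r_m)$ making $\mathbf{a}^{*}$ an AE, and combined with the encroachment verification above this completes the proof.
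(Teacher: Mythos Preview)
Your proposal is correct and follows essentially the same approach as the paper: the paper's proof of this theorem is literally the one-line remark ``In summary, we have proved the following theorem,'' relying on the preceding eight-step construction together with Lemma~\ref{lemma:can-pick-resources}. If anything, your verification that the encroachment graph equals $\mathcal{G}$ \emph{exactly} (ruling out spurious edges via the $\epsilon$-thickening argument) is more explicit than what the paper spells out.
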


Finally, we show that a wide class of natural utility functions satisfy the conditions of Theorem~\ref{theo:all directed graph}.

\begin{prop} \label{prop:function-condition}
	Let $f$ be of the form $f(R,D) = g(R)/h(D)$, and assume that $h(\cdot)$ is increasing, and that $g(\cdot)$ is increasing and sub-exponential, i.e.,
	$   \limsup_{R\to\infty} \frac{\log(g(R))}{R} = 0$. 
	Then $f(\cdot,\cdot)$ satisfies the condition \eqref{eq:condf}.
\end{prop}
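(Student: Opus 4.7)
The plan is to substitute the product form into the ratio in \eqref{eq:condf} and split the analysis into the $R$-dependent factor and the $D$-dependent factor. Write
\[
\frac{f(R,D)}{f(R+r,D+\delta)} \;=\; \frac{g(R)/h(D)}{g(R+r)/h(D+\delta)} \;=\; \frac{h(D+\delta)}{h(D)}\cdot\frac{g(R)}{g(R+r)}.
\]
Since $h$ is increasing and $\delta>0$, the first factor is a fixed constant $c := h(D+\delta)/h(D) > 1$ that does not depend on $R$. Thus condition \eqref{eq:condf} reduces to showing
\[
\limsup_{R\to\infty}\frac{g(R)}{g(R+r)} \;>\; \frac{1}{c}.
\]
Because $1/c<1$, it is enough to establish the stronger statement $\limsup_{R\to\infty} g(R)/g(R+r) = 1$.

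The main step, and what I expect to be the only non-routine part, is to derive this limiting behavior from the sub-exponentiality hypothesis $\limsup_{R\to\infty}\log g(R)/R = 0$. I would do this by contradiction: suppose there exists $\beta<1$ and a threshold $R_0$ such that $g(R)/g(R+r) \le \beta$ for all $R \ge R_0$. Equivalently, $g(R+r) \ge g(R)/\beta$ for every $R \ge R_0$. Iterating this inequality along the arithmetic progression $R_0, R_0+r, R_0+2r, \ldots$ yields $g(R_0+nr) \ge \beta^{-n} g(R_0)$ for every integer $n\ge 1$. Taking logarithms and dividing by $R_0+nr$ gives
\[
\frac{\log g(R_0+nr)}{R_0+nr} \;\ge\; \frac{-n\log\beta + \log g(R_0)}{R_0+nr} \;\xrightarrow[n\to\infty]{}\; \frac{-\log\beta}{r} \;>\; 0,
\]
which contradicts the sub-exponential hypothesis. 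Since $g$ is increasing, the ratio $g(R)/g(R+r)$ is bounded above by $1$, so combining the two bounds gives $\limsup_{R\to\infty} g(R)/g(R+r) = 1$ as required.

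Putting the two pieces together,
\[
\limsup_{R\to\infty} \frac{f(R,D)}{f(R+r,D+\delta)} \;=\; c\cdot \limsup_{R\to\infty}\frac{g(R)}{g(R+r)} \;=\; c \;>\; 1,
\]
which verifies \eqref{eq:condf}. The routine checks are the factoring of the ratio and the monotonicity observations on $h$ and $g$; the one substantive step is the iteration argument ruling out a uniform multiplicative gap $g(R+r)/g(R)\ge 1/\beta>1$, and I do not anticipate any technical obstacle beyond making that contradiction precise.
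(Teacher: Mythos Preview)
Your proposal is correct and follows essentially the same route as the paper's proof: factor the ratio into the constant $h(D+\delta)/h(D)>1$ times $g(R)/g(R+r)$, reduce to showing $\limsup_{R\to\infty} g(R)/g(R+r)=1$, and derive a contradiction from a uniform bound $g(R)/g(R+r)\le\beta<1$ by iterating along an arithmetic progression of step $r$ to force exponential growth of $g$. The only cosmetic difference is that the paper names the assumed limsup $\alpha$ and uses $\beta=(\alpha+1)/2$, whereas you work directly with an arbitrary $\beta<1$; the arguments are otherwise identical.
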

\begin{proof}
	For $f$ of the separable form $f(R,D) = g(R)/h(D)$, we can rewrite \eqref{eq:condf} as:
	\begin{align}
	\limsup_{R\to\infty}\frac{g(R)}{g(R+r)}\frac {h( D+ \delta)}{h(D)} & > 1, \label{eq:con2}
	\end{align}
	Since ${h(D + \delta)}/{h(D)} > 1$, it is sufficient to show that
	$\limsup_{R\to\infty}\frac{g(R)}{g(R+r)} = 1$.
	We prove this by contradiction, and assume that $\limsup_{R\to\infty}{g(R)}/{g(R+r)} = \alpha < 1$. Hence, there exists some sufficiently large $R_0 > 0$ such that $g(R)/g(R+r) < (\alpha+1)/2$ for all $R > R_0$.
	This inequality can be rearranged to $g(R) < \frac{\alpha+1}{2} \cdot g(R+r)$.
	Let $k = \lfloor{R_0/r}\rfloor +1$, and consider the sequence $(g((k+i) r))_{i=0}^\infty$.
	Since $(k+i) r > R_0$ for all $i$, we have
	\begin{align*}
	g(k r) & < \frac{\alpha+1}{2} g((k+1) r)
	< \frac{(\alpha+1)^2}{2^2}g((k+2) r)
	< \cdots
	\end{align*}	
	Hence,
	$g((k+i) r) > \left(\frac{2}{\alpha+1}\right)^i g(k r)$
	for all $i>0$.
	Taking logarithms and dividing by $(k+i)$, we get
	\begin{align*}
	\frac{\log(g((k+i) r))}{(k+i)r}
	& >  \frac{i\cdot\log\left(\frac{2}{\alpha+1}\right) }{(k+i)r} + \frac{\log(g(k r))}{(k+i)r} .
	\end{align*}
	As $i\to\infty$, the right-hand side of the above inequality converges to $\frac{\log(2/(\alpha+1))}{r} > 0$, contradicting the assumption that $g$ is a sub-exponential function.
\end{proof}

Notice that subexponential growth in the \emph{resources} is a very weak condition that is easily satisfied. 

\section{Conclusion and Future Work}
We studied the existence and characteristics of different types of equilibria in a group formation game in which agents benefit from being part of a well-resourced and cohesive group.
We investigated two types of equilibria, AE and SAE, in this game and explored their various structural properties. In particular, we showed that each AE can be represented by a DAG, and conversely, that under mild conditions on the utility functions, every DAG can arise as an AE of a suitably defined instance of the group formation game.

While the assumption that $D_G$ is a function of only $X_G$ is natural, when it does not hold, some DAGs may not arise as equilibrium encroachment graphs; this would be a direction for future research. In addition, one can naturally extend the model to contain \emph{hierarchical} structure, i.e., groups of groups, whereby an agent obtains utility from groups at each level that she belongs to. A deeper analysis of such models may shed interesting light on the emergence and stability of hierarchical organizations. From an algorithmic viewpoint, it will be interesting to study what minimal changes a principal might perform on the locations or resources of agents in order to achieve the formation of a particular group structure.




\section*{ACKNOWLEDGMENT}
This work has been
supported by the NSF under grants CNS- 1939006, CNS-2012001, ATD-2027277, CCF 1934986 and by the ARO under contract W911NF1810208.




\nocite{*}
\bibliographystyle{plain}
\bibliography{myreference}

\end{document}